\documentclass[conference,compsoc]{IEEEtran}
\pdfoutput=1
\usepackage[utf8]{inputenc}
\usepackage[ruled, linesnumbered]{algorithm2e}
\usepackage{pgfplots}
% \pgfplotsset{compat=1.9}
%\usepackage{amssymb}
\usepackage{amsmath}
\usepackage{xspace}
\usepackage{hyperref}%references are links
\usepackage{subcaption}
\usepackage{amsmath}
\usepackage{amsthm}

%----- Include macros for abbreviations, acronyms and the like
%\newtheorem{observation}[theorem]{Observation}
%\newtheorem{proposition}[theorem]{Proposition}

%\newcommand{\degree}{\operatorname{maxdeg}}

\newcommand{\bigO}{\mathcal{O}}

\pagestyle{plain}

%opening
\title{Generating massive complex networks \\ with hyperbolic geometry faster in practice}
\author{\IEEEauthorblockN{Moritz von Looz}
\IEEEauthorblockA{% Institute of Theoretical Informatics\\
Karlsruhe Institute of Technology (KIT),
Germany\\
Email: moritz.looz-corswarem@kit.edu}
\and
\IEEEauthorblockN{Mustafa Safa Özdayi}
\IEEEauthorblockA{
Istanbul Technical University,
Turkey\\
Email: ozdayi@itu.edu.tr
}
\and
\IEEEauthorblockN{Sören Laue}
\IEEEauthorblockA{Friedrich Schiller University Jena,
Germany\\
Email: soeren.laue@uni-jena.de}
\and
\IEEEauthorblockN{Henning Meyerhenke}
\IEEEauthorblockA{% Institute of Theoretical Informatics\\
Karlsruhe Institute of Technology (KIT),
Germany\\
Email: meyerhenke@kit.edu}

}

\newcommand{\asinh}{\ensuremath{\mathrm{asinh}}}
\newtheorem{theorem}{Theorem}

\begin{document}

  \maketitle

  \begin{abstract}
    Generative network models play an important role in algorithm development, scaling studies, network analysis, and realistic system benchmarks for graph data sets.
   The commonly used graph-based benchmark model R-MAT has some drawbacks concerning realism and the scaling behavior of network properties.
    A complex network model gaining considerable popularity builds random hyperbolic graphs, generated by distributing points within a disk in the hyperbolic plane and then adding edges between points whose hyperbolic distance is below a threshold. 
    We present in this paper a fast generation algorithm for such graphs. 
    Our experiments show that our new generator achieves speedup factors of 3-60 over the best previous implementation.
    One billion edges can now be generated in under one minute on a shared-memory workstation.
    Furthermore, we present a dynamic extension to model gradual network change, while preserving at each step the point position probabilities.
  \end{abstract}

  \section{Introduction}
  \label{sec:introduction}
  Relational data of complex relationships often take the form of \emph{complex networks}, graphs with heterogeneous and often hierarchical structure, low diameter, high clustering, and a heavy-tailed degree distribution.
  Examples include social networks, the graph of hyperlinks between websites, protein interaction networks, and infrastructure routing networks on the autonomous system level~\cite{newman2010networks}.

Frequently found properties in generative models for complex network are non-negligible clustering (ratio of triangles to triads), a community structure, and a heavy-tailed degree distribution~\cite{chakrabarti2006graph}, such as a power-law.

  Benchmarks developed to evaluate a system with respect to floating point operations do not represent the requirements of graph algorithms, especially with heterogeneous datasets such as complex networks. The Graph500 benchmark~\cite{graph500} addresses this gap; it is the most widely-used graph benchmark in high-performance computing. It uses the \emph{Recursive Matrix} (R-MAT)~\cite{CZF04} model to generate synthetic networks as benchmark instances.
  Graphs from this model are efficiently computable, but suffer from drawbacks in terms of realism.
  For example, even with fixed parameters, the clustering coefficient shrinks with graph size, while the number of connected components increases, which is problematic for scaling studies~\cite{KoPiPlSe14}.

  An interesting model without this problem are \emph{random hyperbolic graphs} (RHG), a family of geometric graphs in the hyperbolic plane.
  Krioukov et al.~\cite{Krioukov2010} introduced this graph model and showed how the structure of complex networks naturally develops from the properties of hyperbolic geometry.
  To generate a RHG, one randomly samples node positions in a hyperbolic disk, then connects two nodes with an edge with a probability depending on their hyperbolic distance.
  In a special case of this model, an edge between two nodes is added exactly if their distance is below a threshold. This subset of RHG, sometimes called \emph{threshold random hyperbolic graphs}, is well-analyzed theoretically~\cite{bode2014probability,DBLP:conf/icalp/GugelmannPP12,raey} and could be considered as unit-disk graphs in hyperbolic space.
  The resulting graphs show a power-law degree distribution with adjustable exponent, provably high clustering~\cite{DBLP:conf/icalp/GugelmannPP12}, and small diameter~\cite{raey}.

\subsubsection*{Motivation, outline, and contribution}
  A fast generator implementation that scales to large graph sizes and provides
sufficient realism is necessary to create meaningful graph benchmark instances in acceptable time.
While our previous work~\cite{Looz2015HRG} was able to improve the quadratic time complexity of the pairwise probing approach~\cite{Aldecoa2015} for threshold RHGs, it still has superlinear time complexity.
  We therefore provide a faster generation algorithm in this paper for threshold random hyperbolic graphs (Section~\ref{sec:algorithm}), using a new spatial data structure.
  The key idea is to divide the relevant part of the hyperbolic plane into ring-shaped slabs and use these to bound the coordinates of possible neighbors in each slab.
  As our experiments (Section~\ref{sec:experiments}) show, a network with 10 million vertices and 1G edges can be generated with our shared-memory parallel implementation in under one minute, yielding a speedup factor of up to 60 over the best previous implementation~\cite{Looz2015HRG}. For a graph with $n$ nodes and $m$ edges, the measurements suggest 
an $\bigO(n \log n + m)$ time complexity, but we do not have a proof for this.

  While an algorithm with optimal expected linear time complexity has been suggested in a theoretical 
paper~\cite{bringmann2015geometric}, our present work provides the fastest implementation to date.
  The generator code is publicly available in our network analysis toolkit NetworKit~\cite{staudt2014networkit}.
  
  \section{Related Work}
  \label{sec:related-work}
\subsubsection*{Generative Models}
Due to the growing interest in complex networks, numerous generators for them exist. For a comprehensive overview, which
would be outside the scope of this paper, we refer the interested reader to Goldenberg's survey~\cite{goldenberg2010survey}. 
None of the models is suitable for all use cases. As mentioned above, the \emph{Recursive Matrix} (R-MAT)~\cite{CZF04} model 
has received particular attention in the HPC community due to its use in the Graph500 benchmark~\cite{graph500}.
%The model recreates a power-law degree distribution by recursively subdividing the adjacency matrix and adding edges to the matrix sections with parametrized probabilities.
%  Graphs with $n$ vertices and $m$ edges can be generated in $\Theta(m\log n)$.

  \subsubsection*{Hyperbolic Geometry}
  Hyperbolic space is one of the three isotropic spaces, the other two being the (more common) Euclidean space and spherical space. In contrast to the flat Euclidean geometry and the positively curved spherical geometry, hyperbolic geometry has negative curvature~\cite{Anderson2005}.
  Among other interesting properties, hyperbolic geometry shows an \emph{exponential expansion of space}:
  While the area of a Euclidean circle grows quadratically with the circle radius, the area of a circle on the hyperbolic plane grows exponentially with its radius.
  In balanced trees, the number of nodes at a certain distance from the root also grows exponentially with said distance, leading to the suggestion that hierarchical complex networks with tree-like structures might be easily embeddable in hyperbolic space~\cite{Krioukov2010}.
  Indeed, Bogu{\~n}{\'a} et al.~\cite{Boguna2010} demonstrate the connection between hyperbolic geometry and complex networks by embedding the autonomous system internet graph in the hyperbolic plane and enabling locally greedy routing.

  As a generative model, Krioukov et al.~\cite{Krioukov2010} introduced random hyperbolic graphs in 2010.
  To generate a graph, points are first distributed randomly within in a disk $\mathcal{D}_R$ of radius $R$ in the hyperbolic plane.
  The probability density functions for the point distributions are given in polar coordinates, the angular coordinate $\phi$ is distributed uniformly over $[0,2\pi]$, the radial coordinate $r$ is given by~\cite[Eq.~(17)]{Krioukov2010}:
  \begin{equation}
  f(r) = \alpha\frac{\sinh(\alpha r)}{\cosh(\alpha R)-1}
  \label{eq:base-radial-distribution}
  \end{equation}
  The parameter $\alpha$ governs node dispersion, which determines the power-law exponent of the resulting degree distribution.

  After sampling point positions, edges are then added to each node pair $(u,v)$ with a probability given in~\cite[Eq.~(41)]{Krioukov2010}, depending on their hyperbolic distance and parametrized by a temperature $T\geq 0$:
  \begin{equation}
  f(x) = \frac{1}{e^{(1/T)\cdot(x-R)/2}+1}
  \label{eq:Krioukov-equation}
  \end{equation}

  For $\alpha \geq \frac{1}{2}$, the resulting degree distribution follows a power law with exponent $\gamma := 2\alpha +1$~\cite[Eq.~(29)]{Krioukov2010}.
  Given two points in polar coordinates $p=(\phi_p, r_p), q=(\phi_q, r_q)$ on the hyperbolic plane, the distance between them is given by the hyperbolic law of cosines:
  \begin{equation}
  \cosh{\mathrm{dist}(p,q)} = \cosh{r_p}\cosh{r_q}-\sinh{r_p}\sinh{r_q}\cos{|\phi_p-\phi_q|}
  \label{eq:hyperbolic-cosines}
  \end{equation}

  As mentioned briefly in Section~\ref{sec:introduction}, an important special case is $T=0$, where an edge is added to a node pair exactly if the hyperbolic distance between the points is below a threshold.
  This graph family is sometimes called \emph{threshold random hyperbolic graphs}, \emph{hyperbolic unit-disk graphs} or (slightly confusingly) just random hyperbolic graphs.
  While we consider \emph{hyperbolic unit-disk graphs} to be more precise, we stick with \emph{threshold random hyperbolic graphs} to avoid name proliferation.
  Many theoretical results are for this special case~\cite{raey}.

\subsubsection*{RHG Generation Algorithms}
  Previous generators for random hyperbolic graphs exist, both for the general and special case. Aldecoa et al.~\cite{Aldecoa2015} present a generator for the general case with quadratic time complexity,
  calculating distances and sampling edges for all $\Theta(n^2)$ node pairs.

  Von Looz et al.~\cite{Looz2015HRG} use polar quadtrees to generate threshold RHGs with a time complexity of $\bigO((n^{3/2}+m)\log n)$ (with high probability). Recently, von Looz and Meyerhenke~\cite{2015arXiv150901990V} have extended this approach to generate general RHGs with the same time complexity.

  Bringmann et al.~\cite{bringmann2015geometric} propose \emph{Geometric Inhomogeneous Random Graphs} as a generalization of RHGs and describe a generation algorithm with expected linear time complexity. To our knowledge no implementation
of this algorithm is available.

  \section{Algorithm}
  \label{sec:algorithm}
  Our main idea is to partition the hyperbolic plane into concentric ring-shaped slabs (Section~\ref{subsec:slabs}) and use them to limit the number of necessary distance calculations during edge creation (Algorithm~\ref{algo:generation}).
  Point positions are sampled, sorted by their angular coordinates and stored in the appropriate slab as determined by their radial coordinates.
  To gather the neighborhood of a point $v$, we then iterate over all slabs and examine possible neighbors within them.
  Since each slab limits the radial coordinates of points it contains, we can use Eq.~(\ref{eq:hyperbolic-cosines}) to also bound the angular coordinates of possible neighbors in each slab, thus reducing the number of comparisons and running time.

  \subsection{Data Structure}
  \label{subsec:slabs}
  Let $C$ = $\{ c_0, c_1, ... c_{\text{max}} \}$ be a set of $\log n$ ordered radial boundaries, with $c_0 = 0$ and $c_{\text{max}} = R$.
  We then define a \emph{slab} $S_i$ as the area enclosed by $c_i$ and $c_{i+1}$.
  A point $p=(\phi_p, r_p)$ is contained in slab $S_i$ exactly if $c_i \leq r_p < c_{i+1}$.
  Since slabs are ring-shaped, they partition the hyperbolic disk $\mathcal{D}_R$:
  \[
  \mathcal{D}_R = \bigcup_{i=0}^{\log n} S_i.
  \]

  The choice of radial boundaries is an important tuning parameter. After experimenting with different divisions, we settled on a geometric sequence with ratio $p=0.9$.
  The relationship between successive boundary values is then: $c_{i+1} - c_{i} = p\cdot(c_i-c_{i-1})$.
  From $c_0 = 0$ and $c_{\text{max}} = R$, we derive the value of $c_1$:
  \begin{equation}
  \sum_{k=0}^{\log{}n-1} c_1 p^k = R \Leftrightarrow
  c_1\frac{1 - p^{\log{}n}}{1-p} = R \Leftrightarrow
  c_1 = \frac{(1-p)R}{1-p^{\log{}n}}
  \end{equation}
The remaining values follow geometrically.

  \begin{figure}
  \centering
  \includegraphics[width=0.8\linewidth]{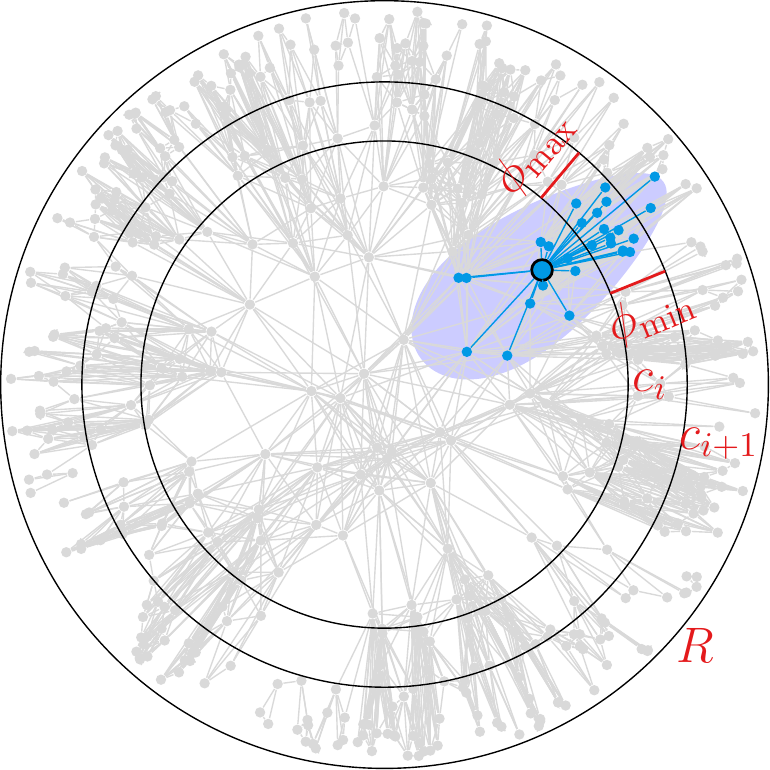}
  \caption{Graph in hyperbolic geometry with unit-disk neighborhood. Neighbors of the bold blue vertex are in the hyperbolic circle, marked in blue.}
  \label{fig:hyperbolic-graph-native} 
  \end{figure}
  Figure~\ref{fig:hyperbolic-graph-native} shows an example of a graph in the hyperbolic plane, together with slab $S_i$.
  The neighbors of the bold blue vertex $v$ are those within a hyperbolic circle of radius $R$ (0.2$R$ in this visualization), marked by the blue egg-shaped area.
  When considering nodes in $S_i$ as possible neighbors of $v$, the algorithm only needs to examine nodes whose angular coordinate is between $\phi_{\min}$ and $\phi_{\max}$.
  
  \subsection{Generation Algorithm}

  \begin{algorithm}
    \KwIn{number of vertices $n$, average degree $\overline{k}$, power-law exponent $\gamma$}
    \KwOut{$G=(V,E)$}
    $\alpha$ = $(\gamma-1)/2$\;
    $R$ = getTargetRadius($n, \overline{k}, \alpha$)\;\label{line:targetRadius}
    $V$ = $n$ vertices\;
    $C$ = $\{ c_0, c_1, ... c_{\text{max}} \}$ set of $\log n$ ordered radial coordinates, with $c_0 = 0$ and $c_{\text{max}} = R$\;\label{line:generate-radials}
    $B$ = $\{ b_0, b_1, ... b_{\text{max}} \}$  set of $\log n$ empty sets\;\label{line:define-bands}

    \ForPar{vertex $v\in V$}{\label{line:generate-vertices-begin}
    draw $\phi[v]$ from $\mathcal{U}[0,2\pi)$\;\label{line:draw-angular}
    draw $r[v]$ with density $f(r) = \alpha\sinh(\alpha r)/(\cosh(\alpha R)-1)$\;\label{line:draw-radial}
    insert $(\phi[v], r[v])$ in suitable $b_i$ so that $c_i \leq r[v] \leq c_{i+1}$\;\label{line:insert-band}
    }\label{line:generate-vertices-end}

    \ForPar{$b \in B$}{\label{line:sort-band-begin}
    sort points in $b$ by their angular coordinates\;
    }\label{line:sort-band-end}

    \ForPar{vertex $v\in V$}{\label{line:last-for-begin}
    \For{band $b_i \in B$, where $c_{i+1} > r[v]$}{ \label{line:inner-last-for-begin}
    $\min_\phi$, $\max_\phi$ = getMinMaxPhi($\phi[v], r[v]), c_i, c_{i+1}, R$)\;\label{line:band-boundaries}
    \For{vertex $w \in b_i$, where $\min_\phi \leq \phi[w] \leq \max_\phi$}{ \label{line:slab-vertices}
    \If{$\text{dist}_{\mathcal{H}}(v,w) \leq R$}{ \label{line:edge-loop}
    add $(v,w)$ to $E$\; \label{line:edge-insertion}
    }
    }
    }
    }\label{line:last-for-end}
    \Return{$G$}\;
    \caption{Graph Generation}
    \label{algo:generation}
    \end{algorithm}

    \subsubsection*{Algorithm}
    Algorithm~\ref{algo:generation} shows the generation of $G = (V, E)$ with average degree $k$ and power-law exponent $\gamma$.
    First, the radius $R$ of the hyperbolic disk is calculated according to desired graph size and density (Line~\ref{line:targetRadius}).

    \subsubsection*{getTargetRadius}
    This function is unchanged from our previous work~\cite{Looz2015HRG}.
    For given values of $n, \alpha$ and $R$, an approximation of the expected average degree $\overline{k}$ is given by~\cite[Eq.~(22)]{Krioukov2010} and the notation $\xi=(\alpha/\zeta)/(\alpha/\zeta-1/2)$:
      \begin{align}
     &\overline{k} = \frac{2}{\pi} \xi^2 n\cdot e^{-\zeta R/2}  + \frac{2}{\pi} \xi^2 n \\
     &\cdot\left(e^{-\alpha R} \left(\alpha \frac{R}{2} \left(\frac{\pi}{4} \left(\frac{\zeta}{\alpha}^2\right) -(\pi-1) \frac{\zeta}{\alpha} + (\pi-2) \right) -1 \right) \right)
    \end{align}
    The value of $\zeta$ can be fixed while retaining all degrees of freedom in the model~\cite{bode2014probability}, we thus assume $\zeta=1$.
    We then use binary search with fixed $n, \alpha$ and desired $\overline{k}$ to find an $R$ that gives us a close approximation of the desired average degree $\overline{k}$.
    Note that the above equation is only an approximation and might give wrong results for extreme values.
    Our implementation could easily be adapted to skip this step and accept the commonly used~\cite{kiwi2015bound} parameter $C$, with $R = 2\ln n + C$ or even accept $R$ directly.
    For increased usability, we accept the average degree $\overline{k}$ as a parameter in the default version.

    \subsubsection*{Vertex Positions and Bands}
    After settling the disk boundary,
    the radial boundaries $c_i$ are calculated (Line~\ref{line:generate-radials}) as defined above, the disk $\mathcal{D}_R$ is thus partitioned into $\log n$ slabs.
    For each slab $S_i$, a set $b_i$ stores the vertices located in the area of $S_i$.
    These sets $b_i$ are initially empty (Line \ref{line:define-bands}).

    The vertex positions are then sampled randomly in polar coordinates (Lines~\ref{line:draw-angular} and \ref{line:draw-radial}) and stored in the corresponding set,
    i.e, vertex $v$ is put into set $b_i$ iff $c_i \leq r[v] < c_{i+1}$ (Line~\ref{line:insert-band}).
    Within each set, vertices are sorted with respect to their angular coordinates (Lines~\ref{line:sort-band-begin} to \ref{line:sort-band-end}).
    
    \subsubsection*{getMinMaxPhi}
    The neighbors of a given vertex $v=(\phi_v, r_v)$ are those whose hyperbolic distance to $v$ is at most $R$.
    Let $b_i$ be the slab between $c_i$ and $c_{i+1}$,  and $u=(\phi_u, r_u) \in b_i$ a neighbor of $v$ in $b_i$.
    Since $u$ is in $b_i$, $r_u$ is between $c_i$ and $c_{i+1}$.
    With the hyperbolic law of cosines, we can conclude:
    \begin{align}
      \cosh{R} \geq \cosh{r_v}\cosh{c_i}-\sinh{r_v}\sinh{c_i}\cos{|\phi_u-\phi_v|} \Leftrightarrow \\
      \cosh{r_v}\cosh{c_i} - \cosh{R} \leq \sinh{r_v}\sinh{c_i}\cos{|\phi_u-\phi_v|} \Leftrightarrow\\
      \cos{|\phi_u-\phi_v|} \geq \frac{\cosh{r_v}\cosh{c_i} - \cosh{R}}{\sinh{r_v}\sinh{c_i}} \Leftrightarrow\\
      |\phi_u - \phi_v| \leq \cos^{-1}\left(\frac{\cosh{r_v}\cosh{c_i} - \cosh{R}}{\sinh{r_v}\sinh{c_i}} \right)
    \end{align}

    To gather the neighborhood of a vertex $v=(\phi_v, r_v)$, we iterate over all slabs $S_i$ and compute for each slab how far the angular coordinate $\phi_q$ of a possible neighbor in $b_i$ can deviate from $\phi_v$ (Line~\ref{line:band-boundaries}).
    We call the vertices in $b_i$ whose angular coordinates are within these bounds the \emph{neighbor candidates} for $v$ in $b_i$.
    
    Since points are sorted according to their angular coordinates, we can quickly find the leftmost and rightmost neighbor candidate in each slab using binary search.
    We then only need to check each neighbor candidate (Line~\ref{line:slab-vertices}), compute its hyperbolic distance to $v$ and add an edge if this distance is below $R$ (Lines~\ref{line:edge-loop} and \ref{line:edge-insertion}).
    Since edges can be found from both ends, we only need to iterate over slabs in one direction; we choose outward in our implementation (Line~\ref{line:inner-last-for-begin}).
    The process is repeated for every vertex $v$ (Line~\ref{line:last-for-begin}).
   
   Not surprisingly the running time of Algorithm~\ref{algo:generation} is dominated by the range queries~(Lines~\ref{line:last-for-begin}-\ref{line:last-for-end}). Our experiments in Section~\ref{sec:experiments} suggest a running time of 
   $\mathcal{O}(n \log n + m)$ for the complete algorithm. This should be seen as an empirical observation; 
   we leave a mathematical proof for future work.
    
    \subsection{Dynamic Model}
    \label{subsec:dynamic-model}
    To model gradual change in networks, we design and implement a dynamic version with node movement.
    While deleting nodes or inserting them at random positions is a suitable dynamic behavior for modeling internet infrastructure with sudden site failures or additions,
    change in e.\ g.,\ social networks happens more gradually.
   
    A suitable node movement model needs to be \emph{consistent}: After moving a node, the network may change, but properties should stay the same \emph{in expectation}.
    Since the properties emerge from the node positions, the probability distribution of node positions needs to be preserved.
    In our implementation, movement happens in discrete time steps.
    We choose the movement to be \emph{directed}: If a node $i$ moves in a certain direction at time $t$, it will move in the same direction at $t+1$,
    except if the new position would be outside the hyperbolic disk $\mathcal{D}_R$.
    In this case, the movement is inverted and the node ``bounces'' off the boundary.
    The different probability densities in the center of the disk and the outer regions can be translated into movement speed: A node is less likely to be in the center; thus it needs to spend less time there while traversing it, resulting in a higher speed.
    We implement this movement in two phases:
    In the initialization, step values $\tau_\phi$ and $\tau_r$ are assigned to each node according to the desired movement.
    Each movement step of a node then consists of a rotation and a radial movement.
    The rotation step is a straightforward addition of angular coordinates: $\mathrm{rotated}(\phi, r, \tau_\phi) = (\phi+\tau_\phi)$ mod $2\pi$.
    The radial movement is described in Algorithm~\ref{algo:radial-movement} and a visualization is shown in Figure~\ref{fig:visualization-dynamic}.
    \begin{algorithm}
    \KwIn{$r,\tau_r, R, \alpha.$}
    \KwOut{$r_{\text{new}}$}
    \begin{enumerate}
    \item x = $\sinh(r\cdot\alpha)$\;
    \item y = x+$\tau_r$\;
    \item z = $\asinh(y)/\alpha$\;
    \item \textbf{Return} z
    \end{enumerate}
    \caption{Radial movement in dynamic model}
    \label{algo:radial-movement}
    \end{algorithm}

    If the new node position would be outside the boundary ($r > R$) or below the origin ($r < 0$), the movement is reflected and $\tau_r$ set to $-\tau_r$.
    \begin{theorem}
    Let $f_{r,\phi}((p_r, p_\phi))$ be the probability density of point positions, given in polar coordinates.
    Let $\mathrm{move}((p_r, p_\phi))$ be a movement step.
    Then, the node movement preserves the distribution of angular and radial distributions:
    $f_{r,\phi}(\mathrm{move}((p_r, p_\phi))) = f_{r,\phi}((p_r, p_\phi))$.
    %$F_X(r) = F_X(\mathrm{scale}(r))$ for $0 \leq r \le R$ and $F_\Phi(\mathrm{rotated}((\phi, r))) = F_\Phi(\phi)$ for $0 \leq \phi \le 2\pi$.
    \end{theorem}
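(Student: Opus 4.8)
The plan is to read the identity in the theorem as the statement that the position measure is \emph{stationary}: the pushforward of the density under one application of $\mathrm{move}$ equals the density again. Since the angular coordinate $\phi$ and the radial coordinate $r$ are sampled independently, the joint density factorizes as $f_{r,\phi}((p_r,p_\phi)) = f_r(p_r)\cdot \frac{1}{2\pi}$, and $\mathrm{move}$ acts on the two coordinates separately; so I would prove invariance factor by factor. The angular factor is immediate: the rotation $\phi \mapsto (\phi + \tau_\phi) \bmod 2\pi$ is a measure-preserving bijection of the circle $[0,2\pi)$, hence it leaves the uniform density $\frac{1}{2\pi}$ unchanged. This disposes of the easy half and isolates the real content in the radial movement.

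For the radial factor, the key observation is that Algorithm~\ref{algo:radial-movement} realizes the map $\mathrm{move}_r = g^{-1}\circ T_{\tau_r}\circ g$, a \emph{conjugation} of the pure translation $T_{\tau_r}(y) = y + \tau_r$ by the coordinate change $g(r) = \sinh(\alpha r)$. I would therefore change variables to $x = g(r)$, push $f_r$ forward to a density $f_x$ on the interval $[g(0),g(R)]$ via $f_x(x)\,dx = f_r(r)\,dr$, and argue that on this interval the reflecting translation is measure-preserving. Since a translation by a constant preserves a density exactly when that density is uniform, and since the boundary reflection (triggered when the new radius would leave $[0,R]$, with $\tau_r \mapsto -\tau_r$) is a measure-preserving involution of the interval, the whole radial step preserves $f_x$ \emph{provided} $f_x$ is the uniform density. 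Pulling back by $g^{-1}$ through the same change-of-variables formula then returns invariance of $f_r$.

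The main obstacle is precisely this uniformization step: one must verify that the coordinate change built into the algorithm turns $f_r$ into the uniform density on $[g(0),g(R)]$. Concretely this amounts to checking that $g$ agrees, up to an affine reparametrization, with the inverse of the radial cumulative distribution function $F_r(r) = \frac{\cosh(\alpha r) - 1}{\cosh(\alpha R) - 1}$, equivalently that $F_r\circ g^{-1}$ is affine in $x$; the Jacobian factor $dr/dx = 1/(\alpha\cosh(\alpha r))$ enters here and must cancel the $\sinh(\alpha r)$ appearing in $f_r$. I expect the careful handling of the two reflecting boundaries $r=0$ and $r=R$ — showing that folding the translation there still preserves the (claimed uniform) measure, so that the billiard-type dynamics is stationary — to be the second delicate point that the argument has to nail down.
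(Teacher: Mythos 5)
Your strategy is genuinely different from the paper's and, in principle, the more rigorous one: you conjugate the radial step to a translation via $g(r)=\sinh(\alpha r)$, reduce invariance to uniformity of the pushed-forward density, and treat the boundary bounces as a measure-preserving folding. The paper instead tracks the density through the three lines of Algorithm~\ref{algo:radial-movement} one substitution at a time, writing $f_X(r)=f_Q(\asinh(r)/\alpha)$ and so on \emph{without} the change-of-variables Jacobians, arrives at $f_Z(r)=f_Q(r)-\tau_r/(\cosh(\alpha R)-1)$, and then argues that $\tau_r$ flips sign at each reflection and is therefore ``zero on average.'' That is a time-averaged statement rather than the per-step identity claimed in the theorem, and the omitted Jacobian factors are exactly what your formulation forces into the open.

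However, the one step you defer --- verifying that $g(r)=\sinh(\alpha r)$ uniformizes $f_r$ --- is precisely where the argument breaks, and you should not expect it to go through. The radial CDF from Eq.~(\ref{eq:base-radial-distribution}) is $F_r(r)=(\cosh(\alpha r)-1)/(\cosh(\alpha R)-1)$, so the inverse-CDF coordinate change (up to an affine reparametrization) is $r\mapsto\cosh(\alpha r)$, not $r\mapsto\sinh(\alpha r)$. Carrying out your own change of variables with $x=\sinh(\alpha r)$ gives
\[
f_x(x)\;=\;f_r(r)\,\frac{dr}{dx}\;=\;\frac{\alpha\sinh(\alpha r)}{\cosh(\alpha R)-1}\cdot\frac{1}{\alpha\cosh(\alpha r)}\;=\;\frac{\tanh(\alpha r)}{\cosh(\alpha R)-1},
\]
which is not constant on $[0,\sinh(\alpha R)]$: the Jacobian $1/(\alpha\cosh(\alpha r))$ does not cancel the $\sinh(\alpha r)$ in $f_r$. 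Hence a translation in the $x$-coordinate does not preserve $f_x$ exactly, and the pointwise identity of the theorem fails for Algorithm~\ref{algo:radial-movement} as written; it holds only approximately away from the origin, where $\tanh(\alpha r)\to 1$. Replacing $\sinh/\asinh$ by $\cosh/\acosh$ makes your argument go through verbatim: then $f_x\equiv 1/(\cosh(\alpha R)-1)$ is uniform on $[1,\cosh(\alpha R)]$, and the folded translation preserves it --- and the caption of Figure~\ref{fig:visualization-dynamic}, which describes mapping onto an interval starting at $1$ on which the distribution is uniform, indicates this is what was intended. Your remaining point, that the boundary reflection is a measure-preserving involution so the billiard-type translation is stationary for the uniform density, is sound, provided the reflection is performed in the transformed coordinate rather than in $r$. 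So: right framework, but the deferred uniformization check must actually be done, and doing it exposes a $\sinh$-versus-$\cosh$ error that the paper's Jacobian-free computation silently glosses over.
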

    \begin{proof}
    Since the distributions of angular and radial coordinates are independent, we consider them separately: $f_{r,\phi} (p_r, p_\phi) = f_r (p_r) \cdot f_\phi (p_\phi)$.
    
    As introduced in Eq.~(\ref{eq:base-radial-distribution}), the radial coordinate $r$ is sampled from a distribution with density $\alpha\sinh(\alpha r) / (\cosh(\alpha R) -1)$.
    %and $\tau_r$ is the unscaled radial movement parameter. %\ref{eq:radial-distribution}
    We introduce random variables $X, Y, Z$ for each step in Algorithm~\ref{algo:radial-movement}, each is denoted with the upper case letter of its equivalent. 
    An additional random variable $Q$ denotes the pre-movement radial coordinate.
    The other variables are defined as $X = \sinh(Q\cdot\alpha)$, $Y = X+\tau_r$ and $Z=\asinh(Y)/\alpha$. % = \mathrm{scale}(Q,\tau_r)$.
    
    Let $f_Q, f_X, f_Y$ and $f_Z$ denote the density functions of these variables:
    \begin{align}
     f_Q(r) &= \frac{\alpha\sinh(\alpha r)}{\cosh(\alpha R) -1}\\
     f_X(r) &= f_Q\left(\frac{\asinh(r)}{\alpha}\right) = \frac{\alpha r}{\cosh(\alpha R) -1}\\
     f_Y(r) &= f_X(r-\tau_r) = \frac{\alpha r-\tau_r}{\cosh(\alpha R) -1}\\
     f_Z(r) &= f_Y(\sinh(r\cdot\alpha)) = \frac{\alpha \sinh(\alpha r) -\tau_r}{\cosh(\alpha R) -1}\\
     &= f_Q(r) - \frac{\tau_r}{\cosh(\alpha R) -1}
    \end{align}

    The distributions of $Q$ and $Z$ only differ in the constant addition of $\tau_r / (\cosh(\alpha R)-1)$.
    Every $(\cosh(\alpha R)-1) / \tau_r$ steps, the radial movement reaches a limit (0 or $R$) and is reflected, causing $\tau_r$ to be multiplied with -1.
    On average, $\tau_r$ is thus zero and $F_Q(r)$ = $F_Z(r)$.

    A similar argument works for the rotational step: 
    While the rotational direction is unchanged, the change in coordinates is balanced by the addition or subtraction of $2\pi$ whenever the interval $[0,2\pi)$ is left, leading to an average of zero in terms of change.
    \end{proof}
    \begin{figure}
    \centering
    \includegraphics[width=0.65\columnwidth]{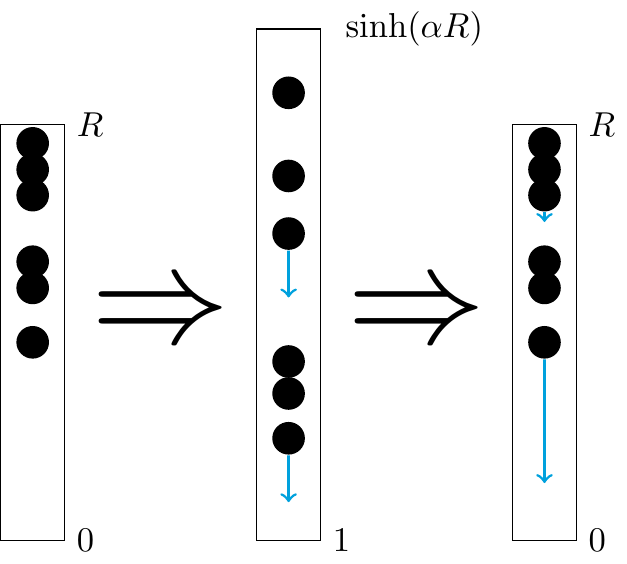}
    \caption{For each movement step, radial coordinates are mapped into the interval $[1,\sinh(\alpha R))$, where the coordinate distribution is uniform.
    Adding $\tau_r$ and transforming the coordinates back results in correctly scaled movements.}
    \label{fig:visualization-dynamic}
    \end{figure}

    \section{Experimental Evaluation}
    \label{sec:experiments}

    \subsubsection*{Setup}
    The generation algorithm is implemented in C++11 and parallelized with OpenMP. Running time measurements were made on a server with 256 GB RAM and 2x8 Intel Xeon E5-2680 cores at 2.7 GHz. With hyperthreading enabled, we use up to 32 threads.
    For memory allocations, we use the lock-free malloc implementation of Intel's Threading Building Blocks library.
    Our code is included in the network analysis toolkit NetworKit~\cite{staudt2014networkit}.

    To compare performance, we generate graphs with $10^5$, $10^6$ and $10^7$ nodes and average degrees between 1 and 64, both with the algorithm presented in this work and the implementation of von Looz et al.~\cite{Looz2015HRG}.

    To validate the distribution of generated graphs, we compare our implementation with the implementation of Aldecoa et al.~\cite{Aldecoa2015}.
    We generate graphs with $10^4$ nodes each for a combination of parameters and calculate several network analytic characteristics, averaging over 100 runs.  
    For the dynamic model, we measure the time required for a movement step and again compare the distributions of network analytic properties.

    \subsubsection*{Running Time}
    \label{subsec:performance}
    Figure~\ref{plot:time-scatter} shows the running times to generate graphs with $10^5$ to $10^7$ nodes and $2\cdot 10^5$ to $128\cdot 10^7$ edges.
    The speedup over the previously fastest implementation~\cite{Looz2015HRG} increases with graph size and sparsity, reaching up to 60 for graphs with $10^7$ nodes and $\approx 4\cdot 10^7$ edges.
    Very roughly, the experimental running times fit a complexity of $\bigO (n\log n + m)$.
    While the running times of the faster generator appear to grow more steeply with increasing edge count, this is an artifact of the logarithmic plot: The same constant increase is relatively larger compared to a smaller running time, and thus appears larger in the logarithmic drawing.

     \definecolor{markedcolor}{RGB}{31,120,180}
    \definecolor{plottinggreen}{RGB}{178,223,138}
    \definecolor{thirdhue}{RGB}{228,26,28}
    \newcommand{\aconst}{7.07}
    \newcommand{\bconst}{2.23}
    \newcommand{\cconst}{8.91}
    \newcommand{\cconstscaled}{891}
   
   \begin{figure}[h!]
      \centering
      \begin{tikzpicture}[scale=0.98]
        \begin{axis}[xmode=log,ymode=log,xlabel=edges,ylabel=running time in seconds,legend entries={}, legend style={at={(0.8,1.8)}}]
          \addplot[scatter,only marks,
          point meta = explicit symbolic,
          scatter/classes={
          %e={draw=plottinggreen,fill=plottinggreen },%
          e={draw=thirdhue,fill=thirdhue },%
          f={draw=markedcolor,fill=markedcolor},
          g={draw=orange,fill=orange},
          % 	    a={mark=diamond*, draw=plottinggreen, fill=plottinggreen},
          b={mark=diamond*, draw, thirdhue, fill=thirdhue},
          c={mark=diamond*, draw, markedcolor, fill=markedcolor},
          d={mark=diamond*, draw, orange, fill=orange}}
          ]
          table[meta=label]
          {plots/comparison-phipute-gbuild-no-leftsup-no-tbb.dat};
          %   \addlegendentry{$n=10^4$, our impl.};
          \addlegendentry{$n=10^5$, our impl.};
          \addlegendentry{$n=10^6$, our impl.};
          \addlegendentry{$n=10^7$, our impl.};
          \addlegendentry{$n=10^5$, impl. of \cite{Looz2015HRG}};
          \addlegendentry{$n=10^6$, impl. of \cite{Looz2015HRG}};
          \addlegendentry{$n=10^7$, impl. of \cite{Looz2015HRG}};
          %a*n*math.log10(n)**2 + b*m +c
          \addplot[thirdhue] expression[domain=150000:16000000] {\aconst * 10^(-3)*5 + \bconst *10^(-8)*x + \cconst*10^(-6)};\addlegendentry{$n=10^5$, theoretical fit};
          \addplot[markedcolor] expression[domain=1500000:160000000] {\aconst * 10^(-2)*6 + \bconst *10^(-8)*x + \cconst*10^(-6)};\addlegendentry{$n=10^6$, theoretical fit};
          \addplot[orange] expression[domain=15000000:1600000000] {\aconst * 10^(-1)*7 + \bconst *10^(-8)*x + \cconst*10^(-6)};\addlegendentry{$n=10^7$, theoretical fit};
        \end{axis}
      \end{tikzpicture}
      \caption{Comparison of running times to generate networks with $10^4$-$10^7$ vertices, $\alpha=1$ and varying $\overline{k}$. Circles represent running times of our implementation, diamonds the running times of the implementation of \cite{Looz2015HRG}.
      Our running times are fitted with the equation $T(n,m) = \left(\aconst \cdot n\log_{10} n + \bconst\cdot m + \cconstscaled\right)\cdot10^{-8}$ seconds.
      }
      \label{plot:time-scatter}
    \end{figure}
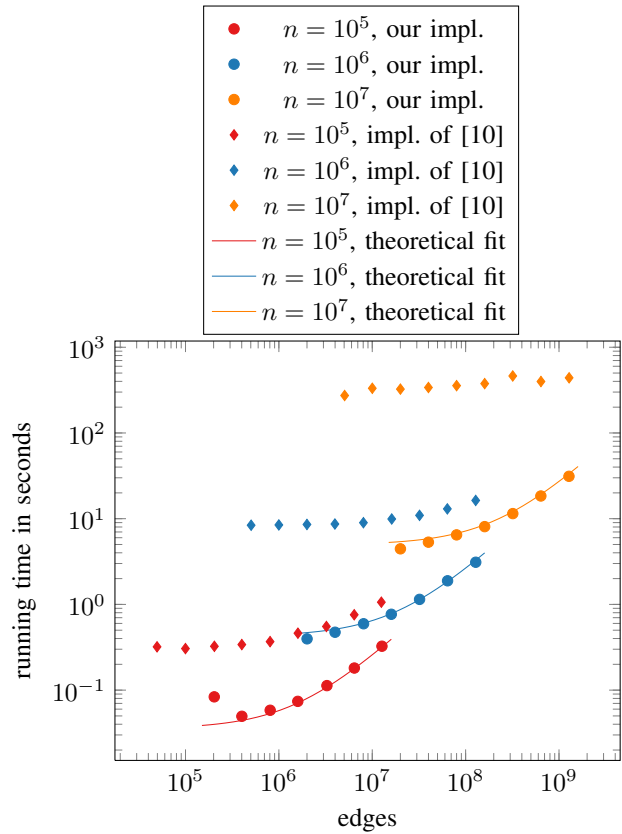

    \begin{figure}
     \begin{tikzpicture}
      \begin{axis}[xlabel=threads, ylabel=speedup factor, legend pos = north west, xtick={4,8,12,16,20,24,28,32}, extra y ticks={18.38}]
      \draw (axis cs:16,0) -- (axis cs:16,19.4);
      \addplot[markedcolor,mark=*] table [x expr = {\thisrowno{1}}, y expr = {50.94/\thisrowno{2}}] {plots/phipute-scaling.dat};\addlegendentry{total};
      \addplot[plottinggreen,mark=*] table [x expr = {\thisrowno{1}}, y expr = {47.20/\thisrowno{4}}] {plots/phipute-scaling.dat};\addlegendentry{edge sampling};
    \end{axis}
    \end{tikzpicture}
    \caption{Speedup curves for $n=10^7, k=6, \gamma=3$ on a machine with 16 physical cores (marked with a vertical line) and hyperthreading. Averaged over 10 runs.}
    \label{plot:scaling}
    \end{figure}
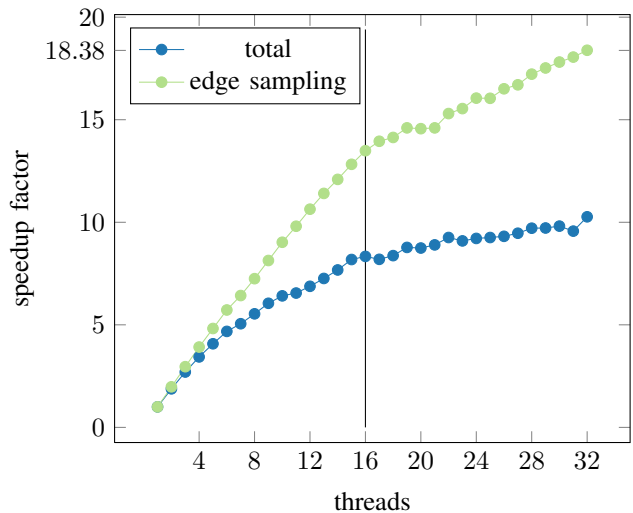

    The scaling behavior for 1 to 32 threads on 16 cores is shown in Figure~\ref{plot:scaling}.
    Considering edge sampling alone, it shows strong scaling up to the number of physical cores, with a speedup of 13.48 for 16 threads.
    With hyperthreading, the speedup increases to 18.38.
    Combining the edge lists later on into the NetworKit graph data structure, however, requires coordination and proves to be a bottleneck in parallel. If only edge lists are required, this final step can be omitted -- as done for example in the Graph500 benchmark.

    \subsubsection*{Distribution of Generated Graphs}
    The average degree assortativity, degeneracy, clustering coefficient and size and diameter of largest components of our generator and the one by Aldecoa et al.~\cite{Aldecoa2015} are shown in Plots~\ref{plot:properties-comparison-I} and \ref{plot:properties-comparison-II} in Appendix~\ref{sec:previous-impl}.
    Averaged over 100 runs, the network analytic properties show a very close match between the distributions of the two generation algorithms.
   
   \subsubsection*{Dynamic Model}
   Our implementation allows updating a graph without rebuilding it from scratch.
   Moving up to 12\% of nodes and updating an existing graph is still faster than a new static generation.
   The distribution of generated graphs is indistinguishable from the static model (Appendix~\ref{sec:dynamic-impl}).

    \section{Conclusions}
    \label{sec:conclusion}
    We have provided the fastest implementation so far to generate massive complex networks based on threshold random hyperbolic graphs. The running time improvement is particularly large for graphs with realistic densities.
%    The distribution of the generated graphs is indistinguishable from the distribution of graphs given by the original generator of Aldecoa et al.~\cite{Aldecoa2015}.
    
    We have also presented a model extension to cover gradual node movement and have proved its consistency regarding the probability densities of vertex positions. 
   
   Both the static and the dynamic model can serve as complex network generators with reasonable realism and fast generation
   times even for massive networks.

      \begin{small}
   \subsubsection*{Acknowledgements}
   This work is partially supported by German Research Foundation (DFG)
grant ME 3619/3-1 (FINCA) and grant GI-711/5-1, both within the Priority Programme 1736 \emph{Algorithms
for Big Data}.
\end{small}

    \bibliographystyle{IEEEtran}
    \bibliography{PolylogGenerator}

    \clearpage
    \appendices

    \section{Comparison with Previous Implementation~\cite{Aldecoa2015}}
    \label{sec:previous-impl}

    \begin{figure}[h!]
      \begin{subfigure}[t]{.45\linewidth}
        \includegraphics[width=\linewidth]{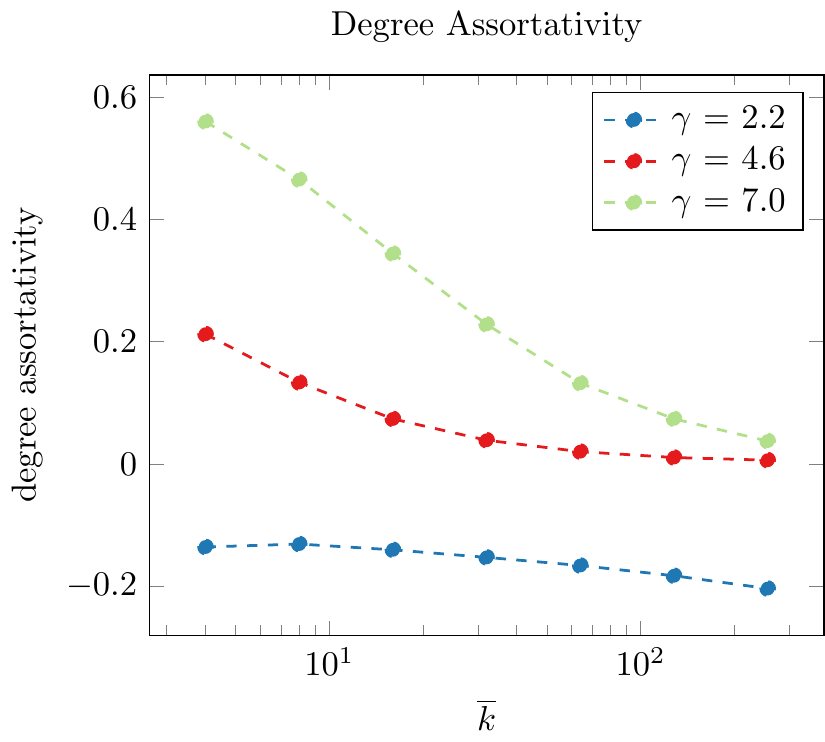}
        % \caption{Degree assortativity}
        \label{plot:native-degass}
      \end{subfigure}
      \quad
      \begin{subfigure}[t]{.45\linewidth}
        \includegraphics[width=\linewidth]{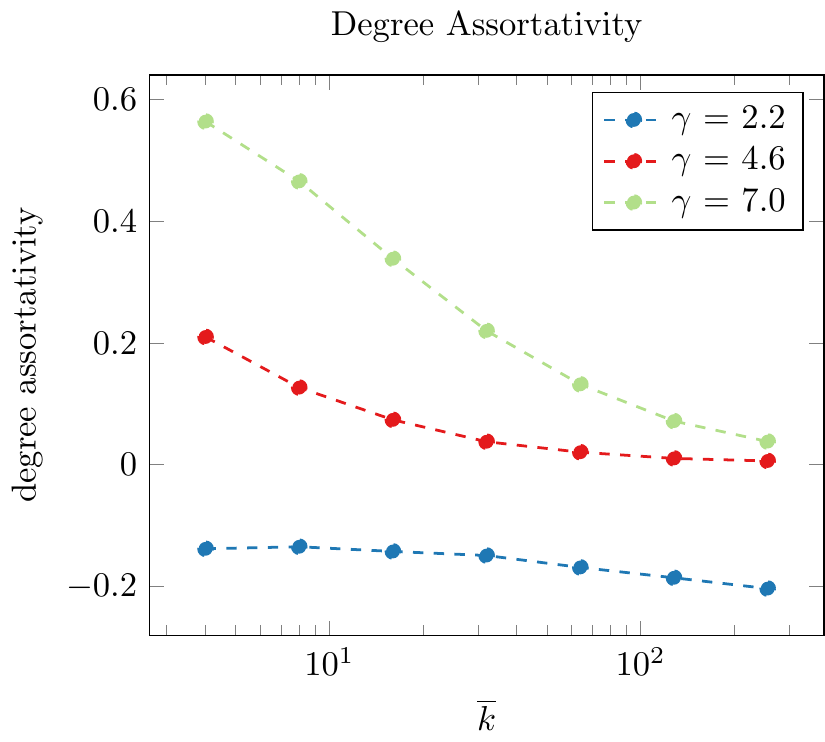}
        % \caption{Degree assortativity}
        \label{plot:comparison-degass}
      \end{subfigure}

      \begin{subfigure}[t]{.45\linewidth}
        \includegraphics[width=\linewidth]{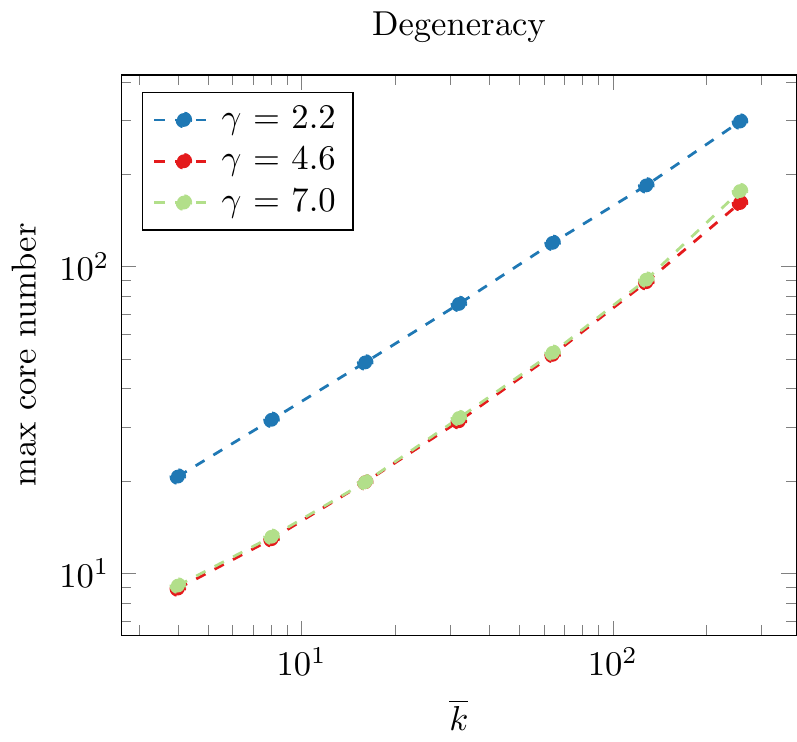}
        % \caption{Max core number}
        \label{plot:native-degen}
      \end{subfigure}
      \quad
      \begin{subfigure}[t]{.45\linewidth}
        \includegraphics[width=\linewidth]{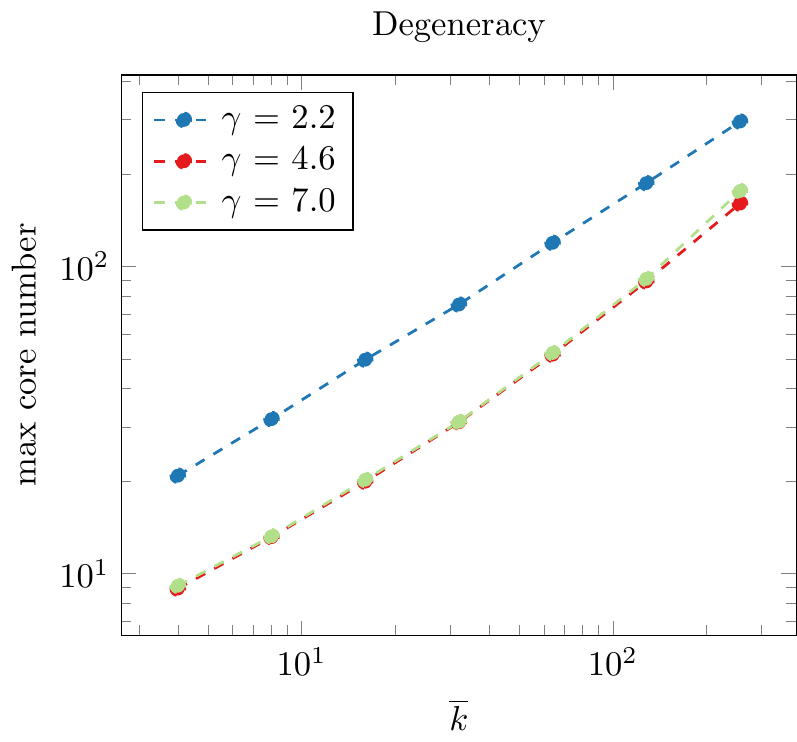}
        % \caption{Max core number}
        \label{plot:comparison-degen}
      \end{subfigure}

      \caption{Comparison of degree assortativity and degeneracy for the implementation of \cite{Aldecoa2015} (left) and our implementation (right).
      Degree assortativity describes whether vertices have neighbors of similar degree.
      A value near 1 signifies subgraphs with equal degree, a value of -1 star-like structures.
      $k$-Cores, in turn, are a generalization of connected components and result from iteratively peeling away vertices of degree $k$ and assigning to each vertex the core number of the innermost core it is contained in. Degeneracy refers to the largest core number.
      Values are averaged over 100 runs.}
      \label{plot:properties-comparison-I}
    \end{figure}
    
    \begin{figure}[h!]
      \begin{subfigure}[t]{.4\linewidth}
        \includegraphics[width=\linewidth]{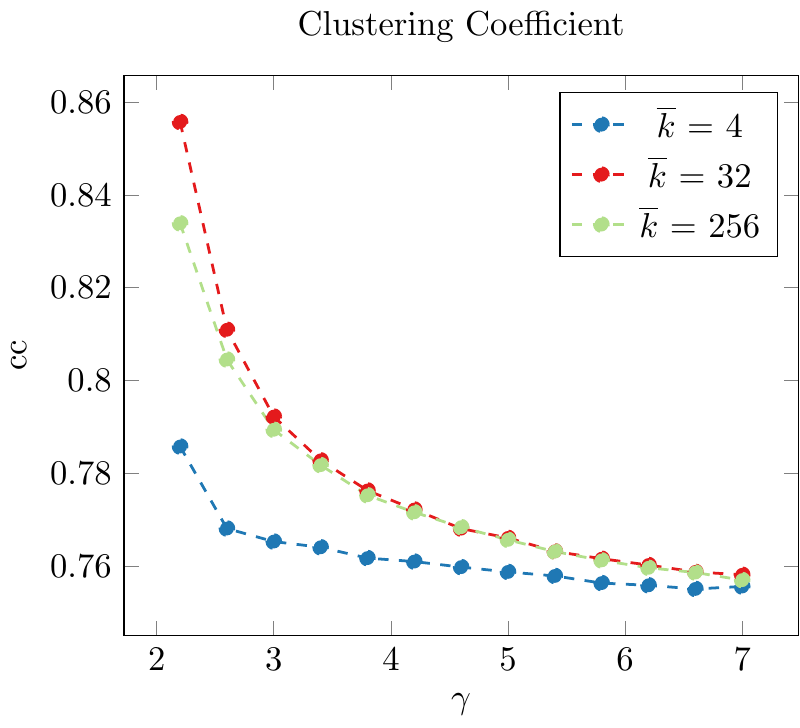}
        % \caption{Clustering coefficient}
        \label{plot:native-clustercoeff}
      \end{subfigure}
      \quad
      \begin{subfigure}[t]{.4\linewidth}
        \includegraphics[width=\linewidth]{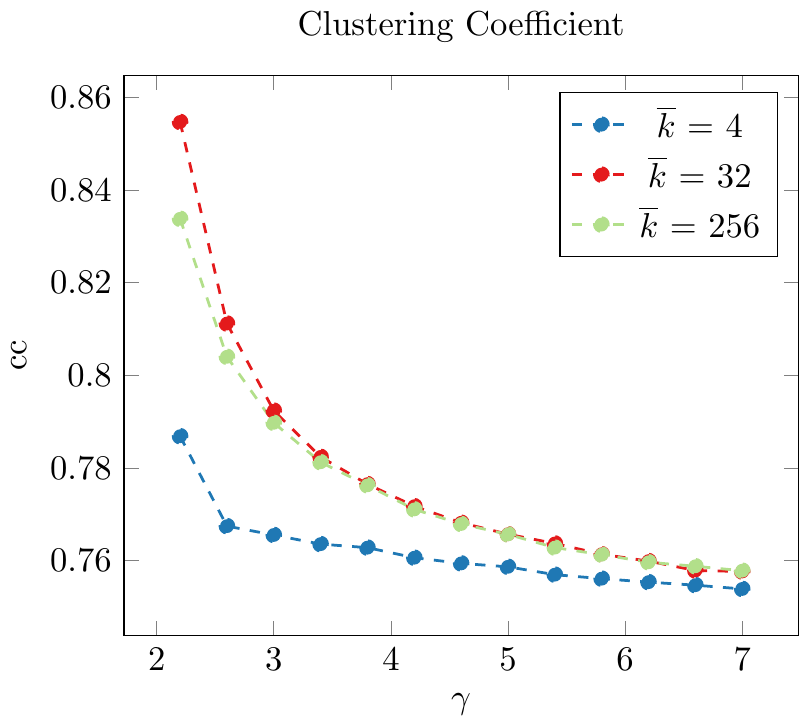}
        % \caption{Clustering coefficient}
        \label{plot:comparison-clustercoeff}
      \end{subfigure}

      \begin{subfigure}[t]{.4\linewidth}
        \includegraphics[width=\linewidth]{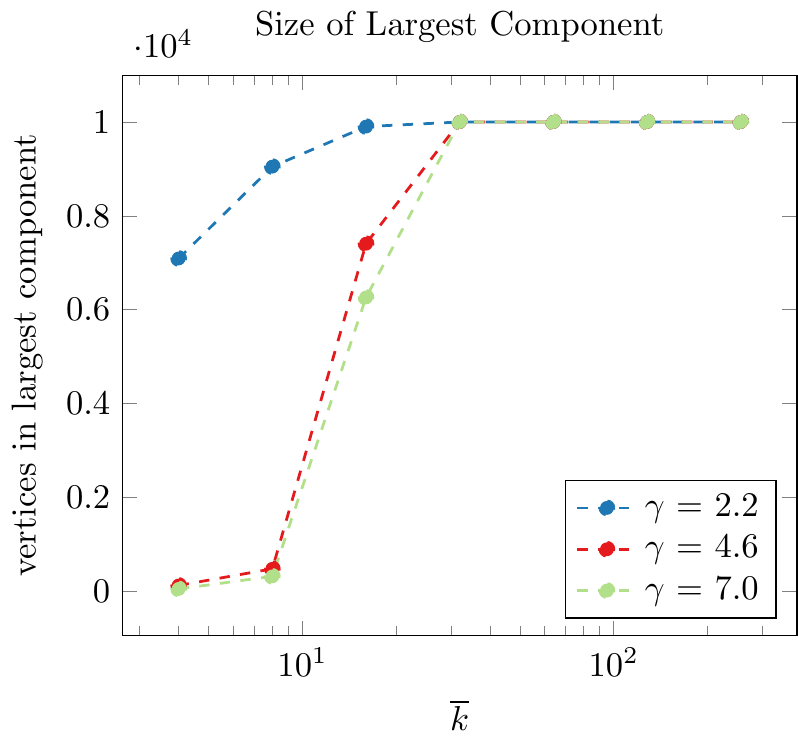}
        % \caption{Vertices in largest component}
        \label{plot:native-size-of-largest}
      \end{subfigure}
      \quad
      \begin{subfigure}[t]{.4\linewidth}
        \includegraphics[width=\linewidth]{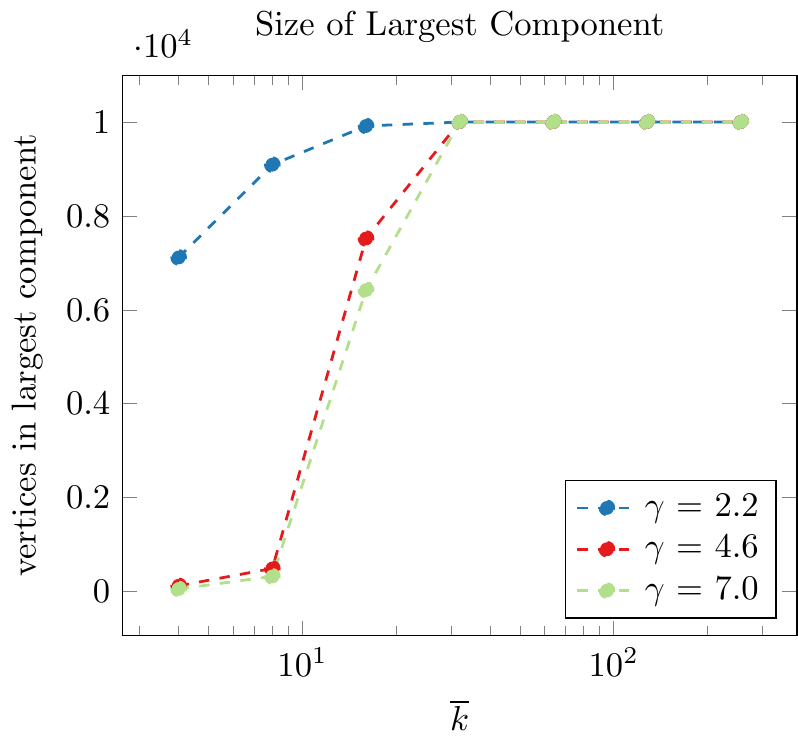}
        % \caption{Vertices in largest component}
        \label{plot:comparison-size-of-largest}
      \end{subfigure}

      \begin{subfigure}[t]{.4\linewidth}
        \includegraphics[width=\linewidth]{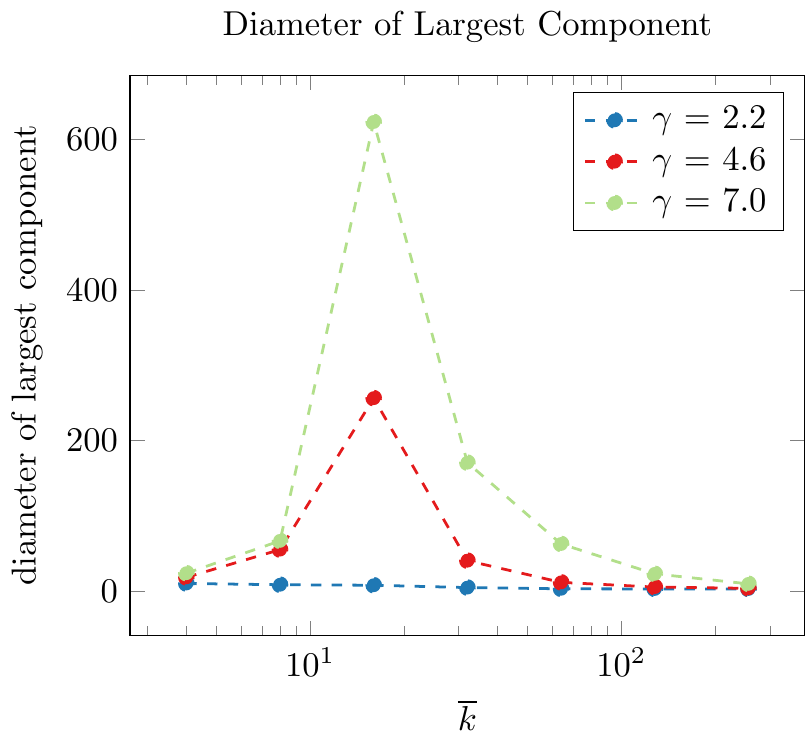}
        % \caption{Diameter of largest component}
        \label{plot:native-diameter}
      \end{subfigure}
      \quad
      \begin{subfigure}[t]{.4\linewidth}
        \includegraphics[width=\linewidth]{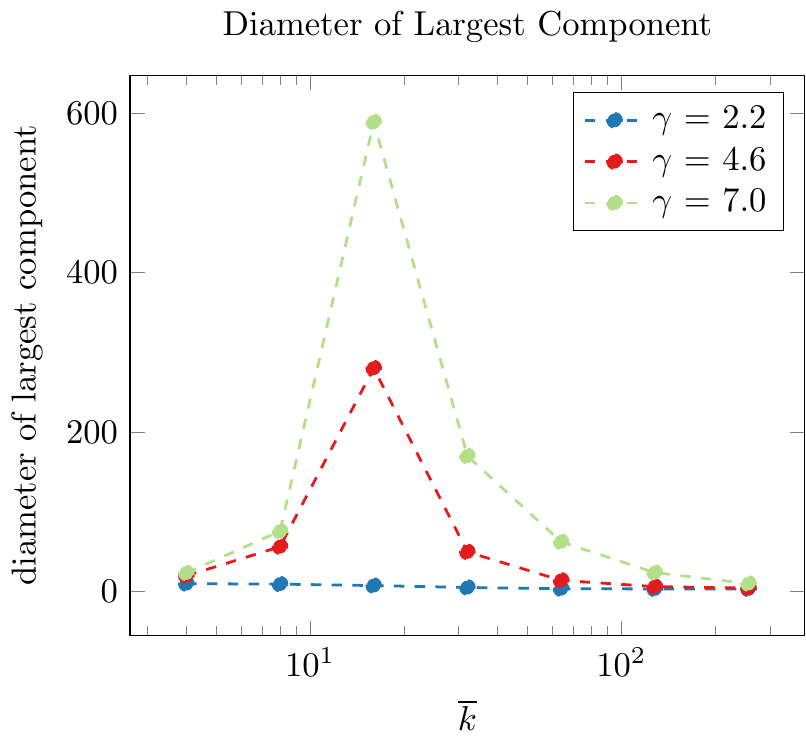}
        % \caption{Diameter of largest component}
        \label{plot:comparison-diameter}
      \end{subfigure}
      \caption{Comparison of clustering coefficients, size of largest component and diameter of largest components for the implementation of \cite{Aldecoa2015} (left) and our implementation (right).
      Values are averaged over 100 runs.}
      \label{plot:properties-comparison-II}
    \end{figure}

\clearpage
    \section{Consistency of Dynamic Model}
    \label{sec:dynamic-impl}

    \begin{figure}[h!]
      \begin{subfigure}[t]{.45\linewidth}
        \includegraphics[width=\linewidth]{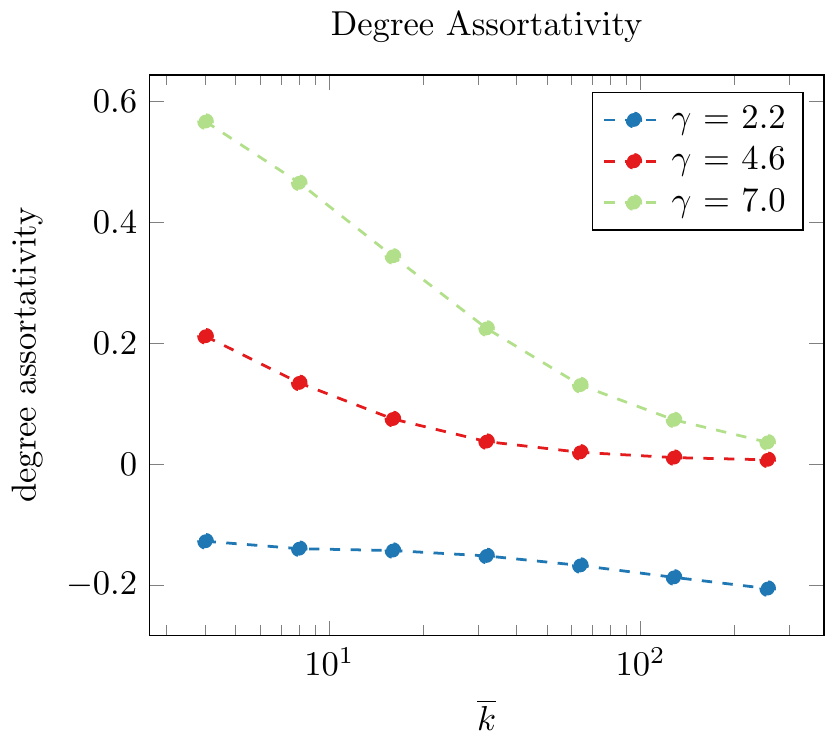}
        % \caption{Degree assortativity}
        \label{plot:dynamic-degass}
      \end{subfigure}
      \quad
      \begin{subfigure}[t]{.45\linewidth}
        \includegraphics[width=\linewidth]{plots/native-comparison/plot-degass}
        % \caption{Degree assortativity}
        \label{plot:static-degass}
      \end{subfigure}

      \begin{subfigure}[t]{.45\linewidth}
        \includegraphics[width=\linewidth]{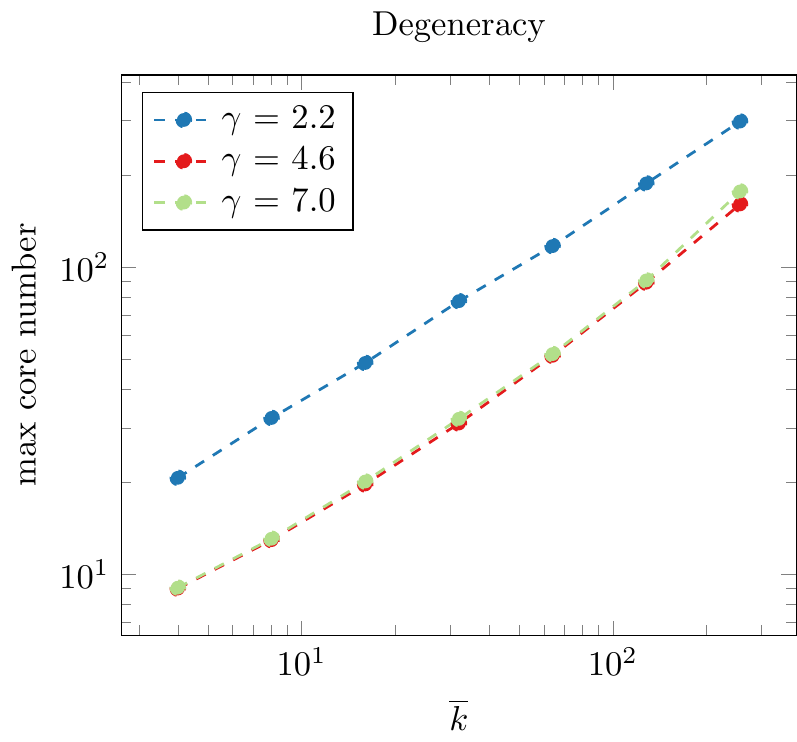}
        % \caption{Max core number}
        \label{plot:dynamic-degen}
      \end{subfigure}
      \quad
      \begin{subfigure}[t]{.45\linewidth}
        \includegraphics[width=\linewidth]{plots/native-comparison/plot-degen}
        % \caption{Max core number}
        \label{plot:static-degen}
      \end{subfigure}

      \caption{Comparison of degree assortativity and degeneracy for graphs with $10^4$ nodes, before and after one movement step.
      All nodes were moved, with $\tau_\phi \in (-1,1)$ and $\tau_r \in (-10,1)$ sampled randomly.
      Distribution of graphs after node movement are shown left, before node movement right.
      Values are averaged over 100 runs.}
      \label{plot:properties-dynamic-comparison-I}
    \end{figure}
    
    \begin{figure}[h!]
      \begin{subfigure}[t]{.4\linewidth}
        \includegraphics[width=\linewidth]{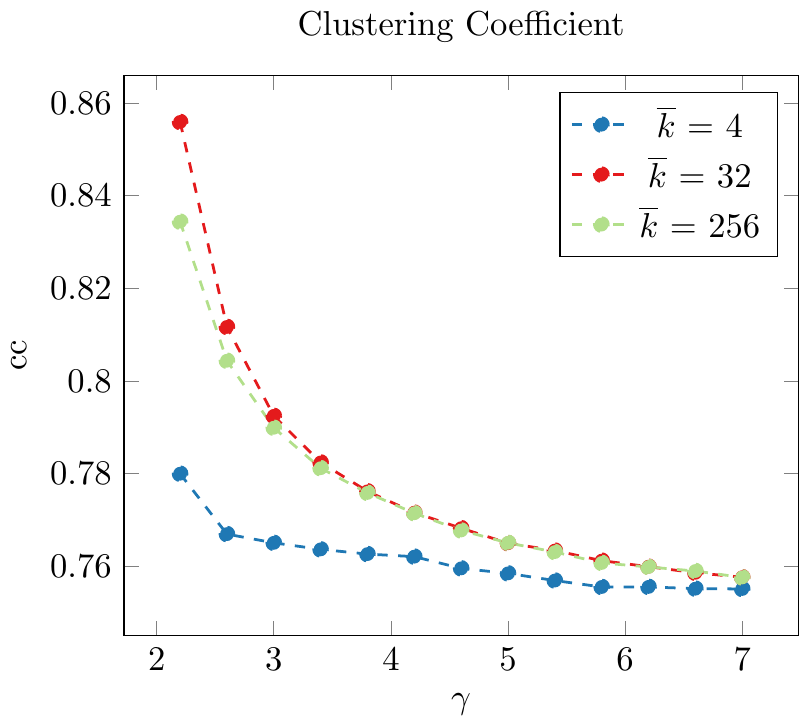}
        % \caption{Clustering coefficient}
        \label{plot:dynamic-clustercoeff}
      \end{subfigure}
      \quad
      \begin{subfigure}[t]{.4\linewidth}
        \includegraphics[width=\linewidth]{plots/native-comparison/plot-cc}
        % \caption{Clustering coefficient}
        \label{plot:static-clustercoeff}
      \end{subfigure}

      \begin{subfigure}[t]{.4\linewidth}
        \includegraphics[width=\linewidth]{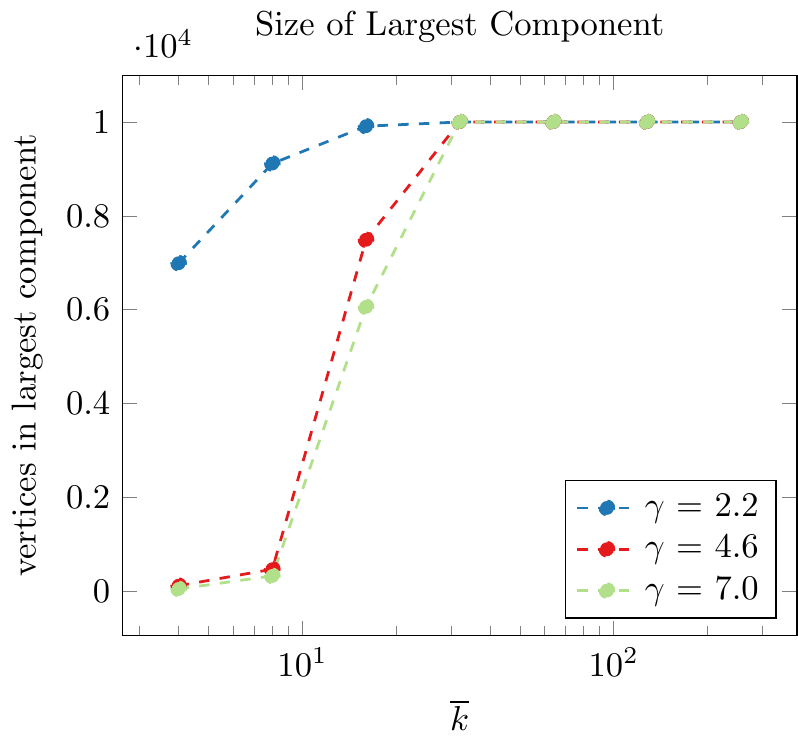}
        % \caption{Vertices in largest component}
        \label{plot:dynamic-size-of-largest}
      \end{subfigure}
      \quad
      \begin{subfigure}[t]{.4\linewidth}
        \includegraphics[width=\linewidth]{plots/native-comparison/plot-sizeOfLargest}
        % \caption{Vertices in largest component}
        \label{plot:static-size-of-largest}
      \end{subfigure}

      \begin{subfigure}[t]{.4\linewidth}
        \includegraphics[width=\linewidth]{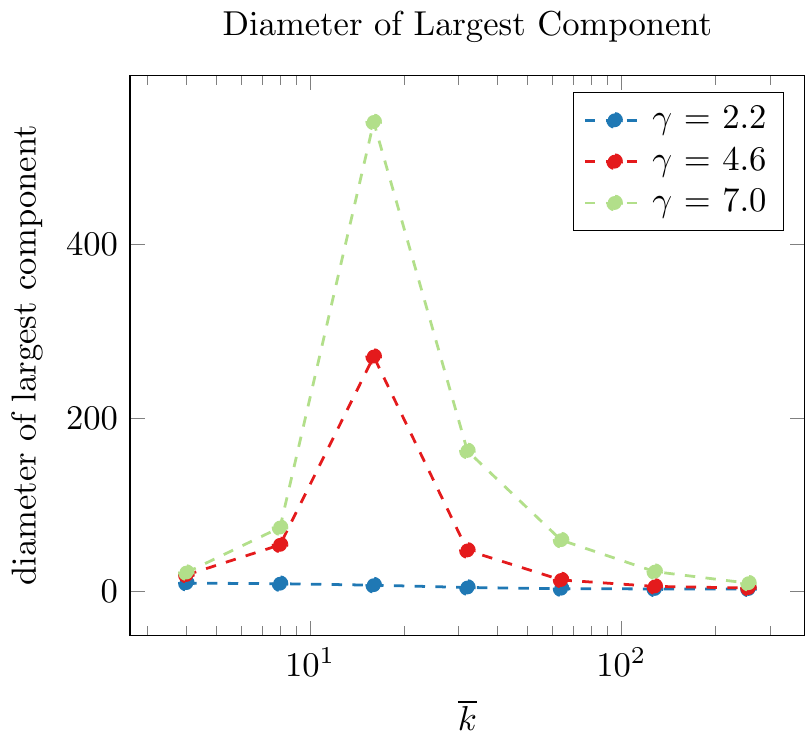}
        % \caption{Diameter of largest component}
        \label{plot:dynamic-diameter}
      \end{subfigure}
      \quad
      \begin{subfigure}[t]{.4\linewidth}
        \includegraphics[width=\linewidth]{plots/native-comparison/plot-diameter}
        % \caption{Diameter of largest component}
        \label{plot:static-diameter}
      \end{subfigure}
      \caption{Comparison of clustering coefficients, size of largest component and diameter of largest components for graphs with $10^4$ nodes, before and after one movement step.
      All nodes were moved, with $\tau_\phi \in (-1,1)$ and $\tau_r \in (-10,1)$ sampled randomly.
      Distribution of graphs after node movement are shown left, before node movement right.
      Values are averaged over 100 runs.}
      \label{plot:properties-dynamic-comparison-II}
    \end{figure}

  \end{document}